\numberwithin{equation}{section}
\theoremstyle{plain}
\newtheorem{theorem}{Theorem}[section]
\newaliascnt{lemma}{theorem}
\newtheorem{lemma}[lemma]{Lemma}
\newaliascnt{corollary}{theorem}
\newtheorem{corollary}[corollary]{Corollary}
\theoremstyle{definition}
\newaliascnt{definition}{theorem}
\newtheorem{definition}[definition]{Definition}
\newaliascnt{example}{theorem}
\newaliascnt{remark}{theorem}
\newtheorem{remark}[remark]{Remark}
\newcommand{\R}{\mathbf{R}}
\renewcommand{\epsilon}{\varepsilon}
\newcommand{\abs}[1]{\left\lvert #1 \right\rvert}
\DeclareMathOperator{\sgn}{sgn}
\title{Liouville integrability of conservative peakons for a 
modified CH equation\footnote{to appear in J. Nonlinear Math. Phys.}}
\author{Xiangke Chang
\thanks{LSEC, Institute of Computational Mathematics and Scientific Engineering Computing, AMSS, Chinese Academy of Sciences, P.O.Box 2719, Beijing 100190, PR China and Department of Mathematics and Statistics, University of Saskatchewan, 106 Wiggins Road, Saskatoon, Saskatchewan, S7N 5E6, Canada; changxk@lsec.cc.ac.cn}
  \and 
  Jacek Szmigielski
  \thanks{Department of Mathematics and Statistics, University of Saskatchewan, 106 Wiggins Road, Saskatoon, Saskatchewan, S7N 5E6, Canada; szmigiel@math.usask.ca}}
\date{}
\begin{document}
\maketitle
\begin{abstract} 
The modified Camassa-Holm equation (also called FORQ) is one of numerous 
\textit{cousins} of the Camassa-Holm equation possessing non-smoth 
solitons (\textit{ peakons}) as special solutions. The peakon 
sector of solutions 
is not uniquely defined: in one peakon sector (dissapative)  the 
Sobolev $H^1$ norm is not preserved, in the other sector (conservative), 
introduced in \cite{chang-szmigielski-short1mCH},   
the time evolution of peakons leaves the $H^1$ norm invariant.
In this Letter, it is shown that the 
conservative peakon equations of the modified Camassa-Holm can 
be given an appropriate Poisson structure relative to which the equations are Hamiltonian and, in fact, Liouville integrable.  The latter is proved directly 
by exploiting the inverse spectral techniques, especially 
asymptotic analysis of solutions, developed elsewhere \cite{chang-szmigielski-m1CHlong}.   
\end{abstract} 

\section{Introduction} 

\vspace{0.5 cm} 

The partial differential equation with cubic nonlinearity 
\begin{equation}\label{eq:m1CH}
m_t+\left((u^2-u_x^2) m)\right)_x=0, \qquad 
m=u-u_{xx}, 
\end{equation}
is a modification of the Camassa-Holm equation (CH) \cite{CH}
\begin{equation} \label{eq:CH} 
m_t+u m_x +2u_x m=0, \, ~m=u-u_{xx}, 
\end{equation} 
for the shallow water waves.  
The history of \eqref{eq:m1CH} is slightly convoluted: it appeared in the papers of Fokas \cite{fokas1995korteweg}, Fuchssteiner \cite{fuchssteiner1996some},  Olver and Rosenau\cite{olver1996tri} and was, later, rediscovered by Qiao \cite{qiao2006new,qiao2007new}.  
Both equations have non-smooth solitons (called \textbf{peakons}) 
as solutions.  
In our recent Letter \cite{chang-szmigielski-short1mCH} we pointed 
out that \eqref{eq:m1CH} has in fact two meaningful types of 
peakon solutions: one type of peakon flows, based 
on the concept of weak solutions to conservation laws, proposed in 
\cite{gui2013wave}, does not preserve the $H^1$ norm $||u||_{H^1}$ , the 
other type put forward in \cite{chang-szmigielski-short1mCH} does.  
For this reason we will refer to these two types of peakons as dissipative, 
conservative, respectively.  

{\bf In this Letter we amplify the message of 
\cite{chang-szmigielski-short1mCH} by showing that in fact the conservative peakons form a Liouville integrable 
Hamiltonian system.} 

  In the remainder of the introduction we argue why this 
is interesting, and certainly not automatic.  Let us briefly recall the 
peakon setup for the CH equation \eqref{eq:CH}, essentially in its original formulation \cite{CH}.  The \textit{peakon Ansatz}
\begin{equation*} 
u=\sum_{j=1}^n p_j(t) e^{-\abs{x-x_j(t)}}
\end{equation*} 
substituted into \eqref{eq:CH} results in the Hamiltonian 
system of equations: 
\begin{equation*} 
\dot x_j=\frac{\partial{H}}{\partial p_j}, \qquad \dot p_j=-
\frac{\partial{H}}{\partial x_j}, 
\end{equation*} 
with the Hamiltonian 
\begin{equation*} 
H=\frac12 \sum_{i,j=1}^n p_i p_j e^{- \abs{x_i-x_j}}. 
\end{equation*}  
Its Liouville integrability was proven in \cite{ragnisco-bruschi} using 
the $R$-matrix formalism.  We empasize that in the CH case the 
amplitudes $p_j$ and positions $x_j$ form conjugate pairs with respect to the 
canonical Poisson structure.

By contrast, in the case of equation \eqref{eq:m1CH}, even though 
the peakon Ansatz looks superficially the same, namely 
$u=\sum_{j=1} m_j(t) e^{-\abs{x-x_j(t)}}$, 
the candidates for momenta $m_j(t)$ are constant and 
one only gets a system of equations on the positions $x_j$ as 
shown in \cite{gui2013wave, qiao-xia} for dissipative peakons and 
in
\cite{chang-szmigielski-short1mCH, chang-szmigielski-m1CHlong} 
for conservative peakons.  
In either case it is not clear from the reduction point of view what 
part of the smooth structure survives the reduction to the peakon sector;  
for 
dissipative peakons, what constitutes the Hamiltonian in the smooth sector, namely the 
square of 
the $H^1$ norm, is not even a constant of motion in the peakon sector even though the norm is well defined.  

To better explain out motivation let us consider  the tempting possibility of 
reaching the conclusions of this paper by transforming 
\eqref{eq:m1CH} to one of the members of the AKNS hierarchy 
by using the reciprocal transformation: 
\begin{equation*} 
x=H(z,t), \qquad dz=mdx-((u_x^2-u^2)m) dt, 
\end{equation*}
discussed in \cite{hone-wang-cubic-nonlinearity, estevez, estevez-sardon} and also \cite{matsuno1mCH} (in \cite{hone-wang-cubic-nonlinearity} 
\eqref{eq:m1CH} is called Qiao's equation).  
The success of such an approach is predicated on finding 
the change of variables $(x,t)\rightarrow (z=F(x,t), \tau=t)$ 
and the inverse $(z,t)\rightarrow (x=H(z,t),t=\tau)$.  
In the smooth sector the existence of $F(x,t)$ is guaranteed if \eqref{eq:m1CH} holds by elementary calculus of 
smooth differential forms (Poincare's Lemma).  In our case, however,  
$m$ is a distribution and \eqref{eq:m1CH} would have to be 
taken as a distributional equation; what remains unclear is which one 
as there is more than one distributional analog of \eqref{eq:m1CH}. 
Either way if $F$ existed it would have to be piecewise constant in $x$ 
since $F_x=m$  and $m$ is a discrete measure, making the transformation 
hard to interpret as a coordinate transformation.  
The situation with the existence of $H(z,t)$ is equally doubtful; 
were $H(z,t)$ to exist it would have to satisfy $H_z=\frac{1}{m}$, the 
inverse of the sum of the Dirac deltas does not seem to have a natural definition.   This state of affairs is not without precedent both in 
the physics and mathematical literature.   The case in point, in fact 
very pertinent to this discussion, is 
the 1-D Schr\"odinger equation
\begin{equation*}
-y_{xx} +uy=E y
\end{equation*}
and the \textit{string} equation: 
\begin{equation*}
-v_{\xi \xi}=Em v. 
\end{equation*}
The transformation (see \cite{CHilbv1}) $y=\sqrt[4]{m} v, \, \, \xi=\int_0^x \sqrt{m(\xi ')} d\xi '$ 
makes both equations equivalent, with $u$ and $m$ related by
\begin{equation*} 
u=\frac{(\sqrt[4]{m})_{xx}}{m}, 
\end{equation*}
provided $m$ is $C^2$ and $m>0$.  When $m$ is a discrete measure, as 
it is in our case, the equivalence fails and these two 
equations are no longer equivalent, either physically, or mathematically.

In summary, we find it more compelling to study the 
peakon sector of \autoref{eq:m1CH} directly using 
well developed theory of distributions and then, if warranted, 
to investigate how, and if, to perform singular limits from the 
smooth sector of \eqref{eq:m1CH} (see interesting comments about 
this procedure in \cite{matsuno1mCH}).   

For conservative peakons studied in this Letter, one 
is tempted to expect Liouville integrability based on the fact 
that the conservative peakon equations are derived from the compatibility conditions 
for a certain distributional Lax pair which was 
constructed in \cite{chang-szmigielski-m1CHlong}.  Indeed, we prove directly 
Liouville integrability by taking advantage of the 
inverse spectral solution formulas obtained in \cite{chang-szmigielski-m1CHlong}.

\section{Conservative peakons}
\autoref{eq:m1CH} 
reduces to the conservative peakon sector 
\cite{chang-szmigielski-short1mCH} defined by the Ansatz 
\begin{equation} \label{eq:peakonansatz} 
    u=\sum_{j=1}^n m_j (t)e^{-\abs{x-x_j(t)}}, 
\end{equation} 
and the multiplication rule 
\begin{equation} \label{eq:defprod} 
u_x^2 m\stackrel{def}{=}\langle u_x^2 \rangle m, 
\end{equation}
where $\langle u_x^2 \rangle m$ means that at a point $x_j$ 
of the singular support of $m=2\sum_{j=1}^n m_j \delta_{x_j} $ one multiplies by the arithmetic average 
of the right and left limits of $u_x^2$ at $x_j$.  The ensuing reduction 
is captured  by the  system of ODEs
\begin{equation}\label{eq:xmODE}
\dot m_j=0, \qquad \dot x_j=u^2(x_j)-\langle u_x^2 \rangle(x_j), 
\end{equation}
on the weights $m_j$ of the measure $m$ and the points of singular 
support $x_j$.  For later use we give two more explicit versions of 
of the nontrivial part of the equations of motion: 

\begin{align}\label{mCH_ode}
\dot{x}_j&=
2\sum_{\substack{1\leq k\leq n,\\k\neq j}}m_jm_ke^{-|x_j-x_k|}+\notag 
\\&\sum_{k\neq j, i\neq j}m_im_k(1-\sgn(x_j-x_k)\sgn(x_j-x_i))e^{-|x_j-x_k|-|x_j-x_i|},
\end{align}
and its more succinct form 
\begin{equation}\label{eq:peakonODEs}
\dot{x}_j=2\sum_{\substack{1\leq k\leq n,\\k\neq j}}m_jm_ke^{-|x_j-x_k|}+4\sum_{1\leq i<j<k\leq n}m_im_ke^{-|x_i-x_k|}, \qquad 1\leq j\leq n, 
\end{equation}
valid when $x_1<x_2<\cdots<x_n$.

Inspired by papers by Hone and Wang \cite{hone-wang-prolongation-algebras, hone-wang-cubic-nonlinearity} , we have 
\begin{theorem}\label{thm:hamiltonianODEs} 
The equations \eqref{mCH_ode} for the motion of n peakons in the PDE \eqref{eq:m1CH} (with the condition \eqref{eq:defprod} in place)  are a Hamiltonian vector field:
\begin{align}
\dot x_j=\{x_j,h\},\qquad \dot m_j=\{m_j,h\}, 
\end{align}
for the Hamiltonian 
$$h=\sum_{i,k=1}^nm_im_ke^{-|x_i-x_k|}=\int u(\xi) m d\xi=||u||^2_{H^1}.$$
Here the Poisson bracket is given by
\begin{align}\label{eq:Poisson bracket} 
\{x_i,x_k\}=\sgn(x_i-x_k), \qquad \{m_i,m_k\}=\{m_i,x_k\}=0.
\end{align}
\end{theorem}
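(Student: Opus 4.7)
The plan is to prove \autoref{thm:hamiltonianODEs} by direct computation inside the open chamber $x_1<x_2<\cdots<x_n$ where \eqref{eq:peakonODEs} is posed.  First I would verify that \eqref{eq:Poisson bracket} really defines a Poisson bracket there: antisymmetry is immediate from $\sgn(x_i-x_k)=-\sgn(x_k-x_i)$, and Jacobi is automatic because on any chamber of the stratum $\{x_i\neq x_k\}$ the structure coefficients $\{x_i,x_k\}=\sgn(x_i-x_k)$, $\{m_i,\cdot\}=0$ are locally constant, and a constant-coefficient bivector trivially satisfies the Jacobi identity.  The equation $\dot m_j=\{m_j,h\}=0$ is free, since $\{m_j,m_i\}=\{m_j,x_k\}=0$ for all $i,k$ and the Leibniz rule annihilates every term of $h$; this also matches the first half of \eqref{eq:xmODE}.

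The heart of the argument is the computation of $\{x_j,h\}$.  Using $\partial_{x_i}e^{-|x_i-x_k|}=-\sgn(x_i-x_k)e^{-|x_i-x_k|}$ and the Leibniz rule, each summand $m_im_k e^{-|x_i-x_k|}$ of $h$ contributes
\[
m_im_k\,\sgn(x_i-x_k)\,e^{-|x_i-x_k|}\bigl(\sgn(x_j-x_k)-\sgn(x_j-x_i)\bigr).
\]
This expression is symmetric under $i\leftrightarrow k$, so the double sum over all $(i,k)$ collapses to twice the sum over $i<k$; inserting the explicit value $\sgn(x_i-x_k)=-1$ on the chosen chamber yields
\[
\{x_j,h\}=2\sum_{i<k} m_im_k\,e^{-|x_i-x_k|}\bigl(\sgn(x_j-x_i)-\sgn(x_j-x_k)\bigr).
\]

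The remaining step is a case analysis on the position of $j$ relative to $(i,k)$ with $i<k$: the sign difference equals $2$ when $i<j<k$ strictly, equals $1$ at the endpoints $j=i$ or $j=k$, and vanishes otherwise.  The endpoint case $j=i$ sweeps the indices $k>j$ while $j=k$ sweeps the indices $i<j$, together covering every index $\neq j$ exactly once, so the boundary contributions assemble into $2\sum_{k\neq j}m_jm_k\,e^{-|x_j-x_k|}$; the interior case $i<j<k$ contributes $4\sum_{i<j<k}m_im_k\,e^{-|x_i-x_k|}$.  Adding the two pieces reproduces \eqref{eq:peakonODEs} exactly.  The only real obstacle is bookkeeping the sign cases, and the $i\leftrightarrow k$ symmetrization above is designed precisely to reduce them to one tractable table; no analytic subtlety arises, because both the Poisson structure and the peakon vector field are posed on the collision-free open set where every operation used above is legitimate.
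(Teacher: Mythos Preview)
Your proposal is correct and follows the same direct-computation strategy as the paper: expand $\{x_j,h\}$ using the Leibniz rule, exploit the $i\leftrightarrow k$ symmetry to restrict to $i<k$, and then do a case analysis on the position of $j$.  The paper does not bother to verify the Jacobi identity, so your constant-coefficient remark is a small bonus.

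The one genuine difference is scope.  The theorem as stated asks you to reproduce \eqref{mCH_ode}, the formula valid for \emph{arbitrary} configurations (with $\sgn$'s left in), whereas you restrict from the outset to the chamber $x_1<\cdots<x_n$ and reproduce \eqref{eq:peakonODEs} instead.  The paper's proof stays in the general setting, keeping all $\sgn$'s, and closes with a nine-case check (including the degenerate cases $x_i=x_k$, $x_i=x_j$, $x_j=x_k$) of the identity
\[
\sgn(x_k-x_i)\bigl(\sgn(x_j-x_i)-\sgn(x_j-x_k)\bigr)=1-\sgn(x_j-x_k)\sgn(x_j-x_i)
\]
together with $|x_i-x_k|=|x_j-x_k|+|x_j-x_i|$ on the support of the left side, which lands exactly on \eqref{mCH_ode}.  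Your chamber argument is cleaner and suffices for everything the paper does afterwards (the Poisson manifold of \autoref{def:pmanifold} is precisely that chamber), but if you want to match the theorem literally you should add one sentence: both $h$ and the bracket \eqref{eq:Poisson bracket} are invariant under simultaneous permutation of the pairs $(m_i,x_i)$, so the identity on one chamber propagates to every chamber and hence to \eqref{mCH_ode} off the collision set.
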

\begin{proof}
It is obvious that 
$$\{m_j,h\}=0$$ 
under the above Poisson bracket, which leads to
$$\dot m_j=0.$$
We proceed with the computation of $ \{x_j,h\}$:
\begin{align*}
 \{x_j,h\}&=\left\{x_j,\sum_{i,k=1}^nm_im_ke^{-|x_i-x_k|}\right\}=\left\{x_j,\sum_{i=1}^nm_i^2+2\sum_{1\leq i<k\leq n}m_im_ke^{-|x_i-x_k|}\right\}\\
 &=\left\{x_j,2\sum_{1\leq i<k\leq n}m_im_ke^{-|x_i-x_k|}\right\}=2\sum_{1\leq i<k\leq n}m_im_k\left\{x_j,e^{-|x_i-x_k|}\right\}\\
 &=2\sum_{1\leq i<k\leq n}m_im_k \sgn(x_k-x_i) e^{-|x_i-x_k|}\left(\sgn(x_j-x_i)-\sgn(x_j-x_k)\right)\\
 &=2\sum_{1\leq j=i<k\leq n}+2\sum_{1\leq i<k=j\leq n}+2\sum_{1\leq j<i<k\leq n}+2\sum_{1\leq i<k<j\leq n}+2\sum_{1\leq i<j<k\leq n}, 
\end{align*}
where we suppressed displaying the actual terms in the summation, 
concentrating on the ranges of summation instead.  
The first two summations give
$$
2\sum_{1\leq j=i<k\leq n}+2\sum_{1\leq i<k=j\leq n}=2\sum_{\substack{1\leq k\leq n,\\k\neq j}}m_jm_ke^{-|x_j-x_k|}.
$$
As for the last three three summations, we have
\begin{align*}
& 2\sum_{1\leq j<i<k\leq n}+2\sum_{1\leq i<k<j\leq n}+2\sum_{1\leq i<j<k\leq n}\\
 &=\sum_{k\neq j, i\neq j}m_im_k \sgn(x_k-x_i) e^{-|x_i-x_k|}\left(\sgn(x_j-x_i)-\sgn(x_j-x_k)\right)\\
 &=\sum_{k\neq j, i\neq j}m_im_k(1-\sgn(x_j-x_k)\sgn(x_j-x_i))e^{-|x_j-x_k|-|x_j-x_i|},
 \end{align*}
 where the last equality is based on the facts that the corresponding equality holds for all the following cases 
\begin{align*}
 &x_i=x_k,\quad x_i=x_j,\quad x_j=x_k,\\
  &x_j<x_i<x_k,\quad x_i<x_j<x_k,\quad x_i<x_k<x_j, \\
  &x_j<x_k<x_i,\quad x_k<x_j<x_i,\quad x_k<x_i<x_j.
  \end{align*}
  Thus, we eventually get
  $$
   \{x_j,h\}=2\sum_{\substack{1\leq k\leq n,\\k\neq j}}m_jm_ke^{-|x_j-x_k|}+\sum_{k\neq j, i\neq j}m_im_k(1-\sgn(x_j-x_k)\sgn(x_j-x_i))e^{-|x_j-x_k|-|x_j-x_i|},
  $$
  which reproduces \eqref{mCH_ode} and thus the proof is completed.
 
\end{proof}

\begin{remark} 
Observe that the Poisson bracket used above is 
a limiting case of a family of Poisson brackets 
discussed in \cite{hone-wang-cubic-nonlinearity}.  In our 
case, if we restrict our considerations to the 
space of positions,  then the Poisson bracket $\sgn(x_i-x_k)$ is up to a scale 
the skew-symmetric Green's function of the operator $D_x$.  
\end{remark} 
For the remainder of this Letter we 
will focus on the following Poisson manifold $(M, \pi)$ defined 
with the help of the Poisson bracket \eqref{eq:Poisson bracket}.  

\begin{definition} \label{def:pmanifold} 
Let 
\begin{equation} 
M=\big\{x_1<x_2<\cdots<x_n\big\}
\end{equation}
 and 
 \begin{equation} 
 \pi(f,g)=\{f,g\}=\sum_{1\leq i<j\leq n} \{x_i, x_j\} \frac{\partial{f}}{\partial{x_i}}\frac{\partial{g}}{\partial{x_j}}
\end{equation} 
 be defined for real valued functions $f,g$ on $M$.  The Poisson 
 manifold $M$ with the Poisson structure $\pi$ will be denoted 
 $(M, \pi)$.  
\end{definition}

We will record the following result which trivially follows from 
\eqref{eq:Poisson bracket} and the definition of rank of $\pi$ (see 
e.g. \cite{poisson}).  
\begin{lemma} \label{lem:rank pi}
Let $n=2K$ or $n=2K+1$.  Then 
the $rank(\pi)=2K$.  
\end{lemma}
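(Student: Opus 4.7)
The plan is to compute the rank of the $n\times n$ antisymmetric matrix $A=(\{x_i,x_j\})_{i,j=1}^n$ directly; on $M$ we have $x_1<\cdots<x_n$, so $A_{ij}=-1$ for $i<j$, $A_{ij}=+1$ for $i>j$, and $A_{ii}=0$, and $\mathrm{rank}(\pi)=\mathrm{rank}(A)$ everywhere on $M$.

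First I would perform the rank-preserving row operations $R_i\mapsto R_i-R_{i+1}$ for $i=1,\dots,n-1$. A direct check shows each resulting row $R'_i$ has only two non-zero entries, both equal to $-1$, in positions $i$ and $i+1$, while the last row $R_n=(1,1,\dots,1,0)$ is untouched. The rows $R'_1,\dots,R'_{n-1}$ now form a bidiagonal staircase, and their restriction to the first $n-1$ columns is lower triangular with $-1$'s on the diagonal; hence they are linearly independent and $\mathrm{rank}(A)\geq n-1$.

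The remaining question is whether $R_n$ lies in the span of $R'_1,\dots,R'_{n-1}$. Writing $R_n=\sum_{i=1}^{n-1}a_i R'_i$ and equating coordinates gives the single condition $a_1=-1$ from column $1$, the recursion $a_j=-1-a_{j-1}$ for $2\leq j\leq n-1$ from the intermediate columns, and the boundary condition $a_{n-1}=0$ from column $n$. The recursion forces $a_j=-1$ for odd $j$ and $a_j=0$ for even $j$, so the boundary condition $a_{n-1}=0$ is compatible precisely when $n-1$ is even, i.e. when $n$ is odd. Consequently, for $n=2K+1$ the row $R_n$ is in the span and $\mathrm{rank}(A)=n-1=2K$, while for $n=2K$ it is not and $\mathrm{rank}(A)=n=2K$; either way $\mathrm{rank}(\pi)=2K$.

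I do not foresee a serious obstacle here: the entire argument is linear algebra, and the only step requiring any thought is choosing the right differencing to reveal the staircase pattern. A structurally equivalent alternative would be to invoke the standard fact that an antisymmetric matrix of odd order $n$ has even rank at most $n-1$, and then exhibit a non-vanishing $2K\times 2K$ principal minor (in fact, the top-left $2K\times 2K$ block has the same form and can be handled inductively via $n\mapsto n+2$), but the row reduction above produces the result in one pass without splitting into parity cases a priori.
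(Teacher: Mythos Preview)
Your argument is correct, with one cosmetic slip: the $(n-1)\times(n-1)$ block formed by restricting $R'_1,\dots,R'_{n-1}$ to the first $n-1$ columns is \emph{upper} bidiagonal (row $i$ has its nonzero entries in columns $i$ and $i+1$), not lower triangular; the conclusion that the diagonal entries are all $-1$ and hence the rows are independent is unaffected.

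As for comparison with the paper: the paper does not give a proof of this lemma at all. It simply records the statement, remarking beforehand that it ``trivially follows from \eqref{eq:Poisson bracket} and the definition of rank of $\pi$.'' Your explicit row reduction therefore supplies considerably more detail than the paper does, but it is exactly in the spirit of what the authors intend --- since $\{x_i,x_j\}=\sgn(x_i-x_j)$ is constant on the open chamber $M=\{x_1<\cdots<x_n\}$, the Poisson tensor is a single constant antisymmetric matrix there, and computing its rank is pure linear algebra.
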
 
\section{A manifold of conservative global peakons} 
To simplify the presentation we will focus mostly on the case $n=2K$. We will use $\mathbf{m}$ throughout  this paper to denote the $n$-tuple of constant masses 
$m_j$.   
First we make a general observation about the nature of the 
vector field in \eqref{eq:xmODE}: the vector field is discontinuous on the hyperplanes $x_j=x_k, j\neq k$.    
Let us then denote by 
$\mathcal{P}$ the set $\{\mathbf{m}; x_1<x_2<\dots<x_{2K}\}$ of masses and positions where 
the vector field (see \eqref{mCH_ode}) is Lipschitz, in fact smooth.  
The scattering map $\mathcal{S}$ used in \cite{chang-szmigielski-m1CHlong} maps $\mathcal{P}$ to the set of admissible 
scattering data $\mathcal{R} =\{\mathbf{m}; d\mu, c\geq 0\}$ consisting of the spectral measure 
$d\mu=\sum_{k=1}^K b_k \delta_{\zeta_k}, \, b_k>0$ and a constant 
$c$ which is $0$ if $n=2K$ and strictly positive if $n=2K+1$.  
The problem is isospectral with the evolution of the spectral measure 
given by $d\mu(t)=e^{\frac{2t}{\zeta} } d\mu(0)$.  
For an arbitrary choice of masses $m_j>0$ and initial positions $x_j(0)$ in 
\eqref{eq:peakonansatz} the flows are in general not global, since  it can happen that $x_j(t)=x_{j+1}(t)$ 
for some $j$ at some finite time $t>0$.  However, there exists a family of 
open subsets of $\mathcal{R}$ (see the Theorem below)  for which the peakon flows are 
global, i.e.  the solutions $x_j(t)$ exist for all $t\in \R$.  This is crucial to our 
argument as we use the asymptotic behaviour of solutions to simplify 
the computations of Poisson brackets.

\begin{theorem}\label{thm:global}[\cite{chang-szmigielski-m1CHlong}, Case $n=2K$]
  Given arbitrary spectral data $$\{b_j>0, \, 0<\zeta_1<\zeta_2<\cdots< \zeta_K: 1\leq j \leq K \}, $$  and denoting 
  by $i'=n+1 -i$, suppose the masses $m_k$ satisfy
\begin{subequations}
\begin{align}
&&\frac{\zeta_K^{\frac{k-1}{2}}}{\zeta_1^{\frac{k+1}{2}}}< m_{(k+1)' }m_{k'}, \quad \text { for all odd }k, \qquad   1\leq k\leq 2K-1, \\
&&\frac{m_{(k+2)'}m_{(k+1)'}}{(1+m_{(k+1)'}^2\zeta_1)(1+m_{(k+2)'}^2\zeta_1)}<
\frac{\zeta_1^{\frac{k+1}{2}}}{\zeta_K^{\frac{k-1}{2}}}
\frac{2\, \textnormal{min}_j(\zeta_{j+1}-\zeta_j)^{k-1}}{(k+1)(\zeta_K-\zeta_1)^{k+1}}, \notag \\
&&\text{ for all odd } k, \qquad 1\leq k\leq 2K-3. \label{eq:seccond}
\end{align}
\end{subequations}
Then the positions obtained from inverse formulas in \cite{chang-szmigielski-m1CHlong} are ordered $x_1<x_2<\cdots<x_{2K}$ and the multipeakon solutions \eqref{eq:peakonansatz} exist for arbitrary $t\in \R$.
\begin{remark} The odd case of $n=2K+1$ is similar, although 
it requires a special care since in addition to $K$ eigenvalues 
$\zeta_j$ we also have an additional constant of motion, called $c$ in  \cite{chang-szmigielski-m1CHlong}, which intuitively plays a role of 
an additional eigenvalue.  
\end{remark}  

\end{theorem}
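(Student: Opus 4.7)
The plan is to reduce both claims---ordering of the $x_j$ and global-in-$t$ existence of the flow---to a single family of strict positivity conditions on Hankel-type determinants built from the spectral measure, and then to exploit the fact that the isospectral flow acts diagonally as $b_j(t)=b_j(0)\,e^{2t/\zeta_j}$ to translate uniform-in-$t$ positivity into explicit inequalities on the initial data.

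First I would unpack the inverse spectral formulas of \cite{chang-szmigielski-m1CHlong} in the even case $n=2K$. These express each gap $x_{j+1}(t)-x_j(t)$, as well as each (conserved) mass $m_j$, as a ratio---or the logarithm of a ratio---of Hankel-like determinants $D_k(t)$ in the moments of
$d\mu(t)=\sum_{j=1}^{K} b_j(0)\,e^{2t/\zeta_j}\delta_{\zeta_j}$.
Under this dictionary the ordering $x_1<x_2<\cdots<x_{2K}$ is equivalent to strict positivity of all the relevant $D_k(t)$, and global existence of the flow amounts to the statement that none of these determinants ever vanishes for $t\in\R$. Thus the problem collapses to a uniform positivity problem for a finite collection of real-analytic functions of $t$.

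Next I would analyze each $D_k(t)$ as a function of $t$. It is a finite sum of monomials in the $b_j(0)$ multiplied by exponentials of the form $\exp\bigl(2t\sum_{j=1}^{K} n_j/\zeta_j\bigr)$ with integer exponents $n_j\geq 0$. Since $0<\zeta_1<\cdots<\zeta_K$, as $t\to+\infty$ the dominant term is the one whose exponent vector is concentrated at $\zeta_1$, and as $t\to-\infty$ the dominant term is concentrated at $\zeta_K$. The first family of inequalities in the theorem is exactly what one needs to ensure that the leading asymptotic coefficient at each extreme is of the correct sign; equivalently, these inequalities buy positivity of each $D_k(t)$ for all sufficiently large $|t|$, which is precisely the asymptotic analysis advertised in the introduction.

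The main obstacle, and the reason one needs the more intricate estimate~\eqref{eq:seccond}, is to exclude the possibility that some $D_k(t)$ passes through zero at an intermediate time. To rule this out I would split $D_k(t)$ into a manifestly positive ``Vandermonde'' part---the contribution assembled from exponent vectors that use all the eigenvalues $\zeta_1,\dots,\zeta_K$---and an indefinite remainder, and then bound the remainder uniformly in $t$ in terms of $\min_j(\zeta_{j+1}-\zeta_j)$ in the numerator and $(\zeta_K-\zeta_1)^{k+1}$ in the denominator; these are precisely the geometric quantities that appear in~\eqref{eq:seccond}. Once the positive part dominates the indefinite remainder uniformly in $t$, strict positivity of $D_k(t)$ and hence of every gap $x_{j+1}(t)-x_j(t)$ follows for all times, and the two conclusions of the theorem are immediate. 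The odd case $n=2K+1$ is handled by the same strategy, with the extra conserved quantity $c$ entering as an additional diagonal entry in the moment data, as indicated in the remark following the theorem.
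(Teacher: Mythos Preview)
The paper does not prove this theorem; it is quoted from \cite{chang-szmigielski-m1CHlong} and used here only as input for the subsequent Liouville-integrability argument. So there is no proof in this document to compare your proposal against.

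That said, two comments on your outline. First, your reading of the two families of inequalities as ``asymptotic positivity'' versus ``intermediate-time positivity'' does not quite match the structure of the problem. From \autoref{thm:evenass} the gaps $x_{k'}-x_{(k+1)'}$ for odd $k$ (within a bound pair) tend to the finite limit $\ln\bigl(m_{(k+1)'}m_{k'}\zeta_{(k+1)/2}\bigr)$, while the gaps for even $k$ (between pairs) diverge; so asymptotic ordering within a pair needs only $m_{(k+1)'}m_{k'}>1/\zeta_{(k+1)/2}$, which is strictly weaker than the first inequality in the theorem. The stronger bound $\zeta_K^{(k-1)/2}/\zeta_1^{(k+1)/2}$ is therefore doing more than controlling the leading asymptotic coefficient, and your proposed split does not by itself explain its form. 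Second, the specific shape of \eqref{eq:seccond}---in particular the factors $(1+m_{(k+1)'}^2\zeta_1)(1+m_{(k+2)'}^2\zeta_1)$ in the denominator---points to direct algebraic estimates on the explicit inverse formulas rather than a generic ``positive Vandermonde part plus bounded remainder'' decomposition; your sketch does not indicate how those factors would emerge. The overall strategy of reducing ordering to positivity of Hankel-type minors and exploiting the diagonal flow $b_j(t)=b_j(0)e^{2t/\zeta_j}$ is sound, but the proposal as written is too schematic to count as a proof: the hard content is precisely the combinatorics linking the stated inequalities on the $m_j$ to uniform-in-$t$ positivity, and that is where you would need to consult \cite{chang-szmigielski-m1CHlong}.
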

Finally, for computations, we will need the asymptotic form 
of global solutions.  The theorem below is a slightly abbreviated form 
of the theorem presented in \cite{chang-szmigielski-m1CHlong} .  
\begin{theorem}\label{thm:evenass} Suppose the masses $m_j$ satisfy the conditions of Theorem \ref{thm:global}.  Then the asymptotic position of a $k$-th (counting from the right) peakon as $t\rightarrow+\infty$ is given by 
\begin{subequations}
\begin{align}
&x_{k'}=\frac{2t}{\zeta_{\frac{k+1}{2}}}+
C_k+\mathcal{O}(
e^{-\alpha_k t}), &\textrm{ for some positive } \alpha_k\, , C_k\in \R, \textrm{ and odd } k, \\
&x_{k'}=\frac{2t}{\zeta_{\frac{k}{2}}}+
C_k+\mathcal{O}(
e^{-\alpha_k t}), &\textrm{ for some positive } \alpha_k\, , C_k\in \R \textrm{ and even } k,\\
&x_{k'}-x_{(k+1)'}=\ln m_{(k+1)'}m_{k'} \zeta_{\frac{k+1}{2}}
+\mathcal{O}(e^{-\alpha_k t}), &\textrm{ for some positive } \alpha_k\, \textrm{ and odd } k.   
\end{align}
\end{subequations}

  Likewise, as $t\rightarrow-\infty$, for convenience using the notation 
  $l^*=K+1-l$, the asymptotic position of the $k$-th peakon is given by
  \begin{subequations}
  \begin{align}
&x_{k'}=\frac{2t}{\zeta_{(\frac{k+1}{2})^*}}+
D_k+\mathcal{O}(
e^{\beta_k t}), \qquad \textrm{ for some positive } \beta_k\, , D_k\in \R \textrm{ and odd } k, \\
&x_{k'}=\frac{2t}{\zeta_{(\frac{k}{2})^*}}+
D_k+\mathcal{O}(
e^{\beta_k t}), \qquad \textrm{ for some positive } \beta_k\,, D_k\in \R \textrm{ and even } k,\\
&x_{k'}-x_{(k+1)'}=\ln m_{(k+1)'}m_{k'} \zeta_{(\frac{k+1}{2})^*}
+\mathcal{O}(e^{\beta_k t}), \qquad \textrm{ for some positive } \beta_k\, \textrm{ and odd } k.   
\end{align}
\end{subequations}
\end{theorem}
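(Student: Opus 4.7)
The plan is to leverage the explicit inverse spectral solution formulas from \cite{chang-szmigielski-m1CHlong} that express each $x_{k'}(t)$ as a logarithm of a ratio of Heine-type spectral integrals $\tau_j(t)$ constructed from the time-evolving spectral measure $d\mu(t)=e^{2t/\zeta}d\mu(0)$, where $d\mu(0)=\sum_{j=1}^{K}b_j\delta_{\zeta_j}$ with $0<\zeta_1<\cdots<\zeta_K$. Each such $\tau_j(t)$ has the structure of a sum over $j$-subsets $I\subset\{1,\dots,K\}$ of strictly positive summands of the form $\bigl(\prod_{i\in I}b_i\,e^{2t/\zeta_i}\bigr)V(I)$, where $V(I)$ is a positive, time-independent Vandermonde-type factor.

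As $t\to+\infty$, I would first identify the dominant subset in each $\tau_j$: since $t>0$ and $\zeta_1$ is smallest, the exponent $\sum_{i\in I}2t/\zeta_i$ is maximized by $I=\{1,2,\dots,j\}$, while the next subset (obtained by the swap $j\leftrightarrow j+1$) is relatively suppressed by $\exp\bigl(2t(\zeta_{j+1}^{-1}-\zeta_{j}^{-1})\bigr)$. Inserting this expansion into the inverse formulas and taking logarithms, adjacent $\tau_j$-ratios produce exactly one velocity term $2/\zeta_{\lceil k/2\rceil}$, matching the asserted $\frac{2t}{\zeta_{(k+1)/2}}$ for odd $k$ and $\frac{2t}{\zeta_{k/2}}$ for even $k$; the finite remainder collects into the constant $C_k$, while the subleading correction produces the $\mathcal{O}(e^{-\alpha_k t})$ error, with $\alpha_k$ controlled by the minimal spectral gap $\min_j(\zeta_{j+1}^{-1}-\zeta_{j}^{-1})$.

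For the pair-distance $x_{k'}-x_{(k+1)'}$ with odd $k$, the two peakons share the same asymptotic velocity $2/\zeta_{(k+1)/2}$, so the leading exponentials cancel in the difference. What remains are the Vandermonde cofactors and the norming constants $b_i$, which, combined with the explicit inverse-spectral expression for the masses $m_{k'}$ in terms of the spectral data from \cite{chang-szmigielski-m1CHlong}, collapse algebraically to $\ln(m_{(k+1)'}m_{k'}\zeta_{(k+1)/2})$. The $t\to-\infty$ asymptotics follow by the symmetric observation that now, since $t<0$, the dominant subset is $\{K-j+1,\dots,K\}$, effectively substituting $\zeta_l\mapsto\zeta_{K+1-l}=\zeta_{l^*}$ throughout and yielding the starred formulas.

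The main obstacle is the algebraic identification of the pair-distance constant: the collapse of products of Vandermonde cofactors, norming constants $b_j$, and inverse-spectral masses $m_{k'}$ to the clean expression $\ln(m_{(k+1)'}m_{k'}\zeta_{(k+1)/2})$ is a delicate combinatorial identity that must be tracked carefully through the inverse map. Once this identity is in hand, the exponential velocities, error estimates, and the constants $C_k, D_k$ follow as relatively routine consequences of the leading-order asymptotics of $\tau_j(t)$, and the hypotheses of \autoref{thm:global} ensure that the ordering $x_1<x_2<\dots<x_{2K}$ persists throughout the flow so that these asymptotic formulas remain valid for all $t\in\R$.
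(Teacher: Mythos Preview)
Your proposal outlines the correct strategy and is essentially the route taken in the cited reference \cite{chang-szmigielski-m1CHlong}. Note, however, that the present paper does \emph{not} supply its own proof of this theorem: it is quoted, in slightly abbreviated form, from \cite{chang-szmigielski-m1CHlong}, so there is no in-paper argument to compare against. Your identification of the dominant subset $\{1,\dots,j\}$ in each $\tau_j$ as $t\to+\infty$ (respectively $\{K-j+1,\dots,K\}$ as $t\to-\infty$), the resulting asymptotic velocities, and the exponential error bounds governed by spectral gaps are all in line with the standard peakon asymptotic analysis carried out there.

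One small clarification: the masses $m_j$ are fixed constants of the flow, not quantities recovered from spectral data; the inverse formulas express the \emph{positions} $x_{k'}$ as logarithms of ratios built from the $\zeta_j$, $b_j(t)$, and the $m_j$. So the ``algebraic collapse'' you flag as the main obstacle is an identity among the Heine-type sums and the $m_j$ themselves, not a reconstruction of the masses. You are right that this identity is the delicate step; in \cite{chang-szmigielski-m1CHlong} it is handled via explicit manipulation of the inverse formulas, and your proposal correctly isolates it as the crux.
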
  
\begin{remark} \label{rem:pairs} We emphasize that in spite of notational 
complexity the most  essential features of the theorem can be stated 
simply: asymptotically, particles form bound states consisting of adjacent 
particles sharing asymptotic velocities $\frac{2}{\zeta_j}, j=1, \dots, K$ and 
moving in pairs, each pair moving as if it were a free particle.  
This picture persists if $n=2K+1$ with the only exception that 
for large positive times the first particle, counting from the left, comes to a stop, 
while the remaining particles pair up the same way they do for $n=2K$. 
Likewise, for large negative times, the first particle, counting from the right, 
comes to a stop, while the rest of particles pair up.  

\end{remark} 
We end this subsection with the corollary which will be used 
in the proof of Liouville integrability of the peakon system \eqref{eq:peakonODEs}.  We will state this lemma 
in a slightly informal way by emphasizing the role of asymptotic pairs 
discussed in \autoref{rem:pairs}.  
\begin{corollary} 
Let $n=2K$ then asymptotically pairs of peakons scatter, that is 
the distances between particles from distinct pairs diverge to $\infty$.  
If $n=2K+1$ and one counts the first particle, counting from the left, as a 
`` pair'' then asymptotically, as $t\rightarrow \infty$, the pairs scatter.  Likewise, if one counts the first particle, counting from the right, as a 
`` pair '' then asymptotically, as $t\rightarrow - \infty$, the pairs scatter.  
\end{corollary}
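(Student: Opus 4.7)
The plan is to derive this corollary as an immediate kinematic consequence of the asymptotic formulas already established in \autoref{thm:evenass}. The key observation is that those formulas exhibit, for each particle, a linear-in-$t$ leading term whose slope depends only on the pair to which the particle belongs; distances within a pair therefore stay bounded, while distances across distinct pairs must grow linearly in $|t|$.

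First I would treat $n=2K$ with $t\to+\infty$. According to \autoref{thm:evenass}, for $1\le j\le K$ the $j$-th pair consists of the two particles $x_{(2j-1)'}$ and $x_{(2j)'}$, each having asymptotic velocity $\frac{2}{\zeta_j}$ (this comes from setting $k=2j-1$ in the odd-$k$ formula and $k=2j$ in the even-$k$ formula). For two distinct pairs with indices $j\neq l$, and any particle $x_p$ from pair $j$ and $x_q$ from pair $l$, subtracting the two asymptotic expansions gives
\[
x_p(t)-x_q(t)=\left(\frac{2}{\zeta_j}-\frac{2}{\zeta_l}\right)t+O(1),
\]
and since $\zeta_j\neq\zeta_l$ the coefficient of $t$ is nonzero, so $|x_p(t)-x_q(t)|\to\infty$. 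The statement for $t\to-\infty$ is identical, the only change being that the pair-to-eigenvalue assignment is now $j\mapsto\zeta_{j^*}$ with $j^*=K+1-j$; this is still a bijection, so distinct pairs continue to correspond to distinct asymptotic velocities.

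For the odd case $n=2K+1$, \autoref{rem:pairs} tells us that one distinguished particle (leftmost as $t\to+\infty$, rightmost as $t\to-\infty$) comes to a stop, so its position is bounded in the relevant time direction, while the remaining $2K$ particles pair up exactly as in the even case with nonzero velocities $\frac{2}{\zeta_j}>0$. Hence the distance between the stationary singleton and any particle in a genuine pair grows like $\frac{2|t|}{\zeta_j}\to\infty$, and the distances between particles from distinct genuine pairs diverge by the same argument as above. The only real obstacle in the entire proof is careful bookkeeping of the indexing $k\mapsto k'=n+1-k$ and the reversal $j\mapsto j^*$ in the negative-time branch; once this is set up, the corollary reduces to the elementary fact that distinct positive reals $\frac{2}{\zeta_j}$ produce linearly diverging trajectories.
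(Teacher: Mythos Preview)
Your argument is correct and is precisely the intended one: the paper states this corollary without proof, treating it as an immediate consequence of \autoref{thm:evenass} and \autoref{rem:pairs}, and your write-up simply makes explicit the obvious kinematic step (distinct eigenvalues $\zeta_j$ give distinct asymptotic velocities $2/\zeta_j$, hence linearly diverging inter-pair distances).
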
 

\subsection{Liouville integrability} 
We need to introduce a bit of notation to 
facilitate the presentation of formulas and subsequent computations.  
Most of the computations in this Letter  involve a choice of $j$-element index sets $I$ and~$J$
from the set $[k] = \{ 1,2,\dots,k \}$.
We will use the notation
$\binom{[k]}{j}$ for the set of all $j$-element subsets of $[k]$, listed in increasing order; for example $I\in \binom{[k]}{j}$ means that 
$I=\{i_1, i_2,\dots, i_j\}$ for some increasing sequence $i_1 < i_2 < \dots < i_j\leq ~k$. 
Furthermore, given the multi-index $I$ we will abbreviate $g_I=g_{i_1}g_{i_2}\dots g_{i_j}$ etc.
 
\begin{definition}\label{def:bigIndi} Let $I,J \in \binom{[k]}{j}$, or $I\in \binom{[k]}{j+1},J \in \binom{[k]}{j}$.  
\mbox{}

Then  $I, J$ are said to be \emph{interlacing} if 
\begin{equation*}
  \label{eq:interlacing}
    i_{1} <j_{1} < i_{2} < j_{2} < \dotsb < i_{j} <j_{j}
\end{equation*}
or, 
\begin{equation*}
    i_{1} <j_{1} < i_{2} < j_{2} < \dotsb < i_{j} <j_{j}<i_{j+1}, 
\end{equation*}
in the latter case.  
We abbreviate this condition as $I < J$ in either case, and, furthermore, 
use the same notation, that is $I<J$,   for $I\in \binom{[k]}{1}, J \in \binom{[k]}{0}$.
\end{definition} 

\begin{enumerate}
\item \textbf{Case $n=2K$.}

It was shown in \cite{chang-szmigielski-m1CHlong} that the quantities 
\begin{equation} \label{eq:Hjs-even}
H_j=\sum_{\substack{I,J \in \binom{[2K]}{j}\\ I<J}} h_I g_J, \qquad 1\leq j\leq K, 
\end{equation}
with $ h_i=m_ie^{x_i},\ \  g_i=m_ie^{-x_i} $, form a set of $K$  constants  of motion for the system \eqref{eq:peakonODEs} in the even case $n=2K$.  In particular the Hamiltonian $h$ in \autoref{thm:hamiltonianODEs} satisfies 
\[
h=2H_1+\sum_{i=1}^{n}m_i^2.
\]

We need the following computational result.  
\begin{lemma} \label{lem:lem1} 
Consider the Poisson bracket given in \autoref{def:pmanifold}.  
Then  
$$\left\{\prod_{p\in I, |I|=i}m_{2p-1}m_{2p}e^{x_{2p-1}-x_{2p}},\prod_{q\in J,|J|=j}m_{2q-1}m_{2q}e^{x_{2q-1}-x_{2q}}\right\}=0.$$
\end{lemma}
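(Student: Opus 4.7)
The plan is to reduce the claim to the observation that the ``pair-gap'' variables $y_p := x_{2p-1}-x_{2p}$ Poisson-commute with one another, after which the statement follows by the Leibniz rule and the chain rule.

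The first step is to use the fact that $\{m_i,\cdot\}=0$ for all $i$, so the masses behave as scalars under the bracket and factor out. After doing so, and after rewriting each exponential factor as $e^{y_p}$, the task reduces to verifying
\[
\left\{\prod_{p\in I}e^{y_p},\ \prod_{q\in J}e^{y_q}\right\}=0.
\]
Applying the Leibniz rule repeatedly on both slots and the chain rule to each exponential, this bracket equals
\[
\Bigl(\prod_{p\in I}e^{y_p}\Bigr)\Bigl(\prod_{q\in J}e^{y_q}\Bigr)\sum_{p\in I,\,q\in J}\{y_p,y_q\},
\]
so it is enough to prove $\{y_p,y_q\}=0$ for all $p,q$.

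The second step is a direct case check. Expanding bilinearly,
\[
\{y_p,y_q\}=\{x_{2p-1},x_{2q-1}\}-\{x_{2p-1},x_{2q}\}-\{x_{2p},x_{2q-1}\}+\{x_{2p},x_{2q}\},
\]
and on $M=\{x_1<x_2<\cdots<x_n\}$ each bracket $\{x_i,x_k\}$ is simply $\sgn(i-k)\in\{-1,0,+1\}$. For $p=q$ the four terms are $0-(-1)-1+0=0$. For $p\ne q$, assume without loss of generality $p>q$; then each of the four index differences $(2p-1)-(2q-1)$, $(2p-1)-2q$, $2p-(2q-1)$, $2p-2q$ is strictly positive, so all four signs are $+1$ and the alternating sum is $1-1-1+1=0$. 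This is the whole computation.

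There is really no hard step: once one notices that the pair-gap combinations $y_p$ collapse the $\sgn(i-k)$ bracket to zero by a parity/telescoping cancellation, the Leibniz reduction is mechanical. The only mild care needed is to keep the $p=q$ case distinct, since there one uses $\sgn(0)=0$ together with antisymmetry rather than the uniform $+1$ that appears when $p\ne q$.
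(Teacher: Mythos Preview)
Your proof is correct and follows essentially the same route as the paper: factor the exponentials out via the Leibniz rule to reduce to $\{x_{2p-1}-x_{2p},\,x_{2q-1}-x_{2q}\}$, then expand into four $\sgn$ terms that cancel. The only cosmetic difference is that the paper writes all four terms uniformly as $\sgn(p-q)$, whereas you (more carefully) separate out the $p=q$ case; either way the alternating sum vanishes.
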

\begin{proof}
Let us denote $$\prod_{p\in I, |I|=i}m_{2p-1}m_{2p}e^{x_{2p-1}-x_{2p}}=F, 
\quad  \prod_{q\in J,|J|=j}m_{2q-1}m_{2q}e^{x_{2q-1}-x_{2q}}=G. $$  

Then, by elementary properties of exponentials and basic properties of Poisson brackets, we have 
\begin{equation*}
\left\{F,G\right\}
=FG \sum_{p\in I,q\in J}\left\{x_{2p-1}-x_{2p},x_{2q-1}-x_{2q}\right\}.
\end{equation*}
However,
\begin{align*}
\{x_{2p-1}-x_{2p},x_{2q-1}-x_{2q}\}&=\{x_{2p-1},x_{2q-1}\}-\{x_{2p-1},x_{2q}\}-\{x_{2p},x_{2q-1}\}+\{x_{2p},x_{2q}\}\\
&=\sgn(p-q)-\sgn(p-q)-\sgn(p-q)+\sgn(p-q)=0.
\end{align*}
Thus the claim is proved.
\end{proof}

\begin{theorem} \label{thm:even commute} 
The Hamiltonians  $H_1,\cdots,H_K$  Poisson commute.
\end{theorem}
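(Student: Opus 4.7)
The plan is to use the asymptotic pair structure from \autoref{thm:evenass} to reduce the computation of $\{H_i, H_j\}$ to the pair-commutativity handled by \autoref{lem:lem1}. I would introduce the pair quantities
\[
Y_p := m_{2p-1}\,m_{2p}\,e^{x_{2p-1}-x_{2p}}, \qquad p = 1, \ldots, K,
\]
and the \emph{asymptotic Hamiltonians}
\[
H_j^{\infty} := \sum_{P \in \binom{[K]}{j}} \prod_{p \in P} Y_p, \qquad 1 \le j \le K,
\]
i.e.\ the elementary symmetric polynomials in $Y_1, \ldots, Y_K$. By \autoref{lem:lem1} and the Leibniz rule, $\{Y_P, Y_Q\} = 0$ for all multi-indices $P, Q$, hence $\{H_i^{\infty}, H_j^{\infty}\} = 0$ identically on $M$.

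Next I would verify that along any global trajectory $x(t)$ --- provided by the open set of scattering data in \autoref{thm:global} --- the residual $R_j := H_j - H_j^{\infty}$ and each partial derivative $\partial_{x_\ell} R_j$ decay exponentially as $t \to +\infty$. A generic monomial $h_I g_J$ contributing to $H_j$ carries the factor $\exp\bigl(\sum_{p=1}^{j}(x_{i_p} - x_{j_p})\bigr)$; by \autoref{thm:evenass} each $x_\ell$ has an asymptotic velocity depending only on its pair $\{2p-1, 2p\}$, and the interlacing condition forces $v_{i_p} \le v_{j_p}$ with equality exactly when $(i_p, j_p) = (2P_p - 1, 2P_p)$ for some strictly increasing sequence $P_1 < \cdots < P_j$ in $[K]$. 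These distinguished monomials are precisely the ones comprising $H_j^{\infty}$; every other interlacing produces a monomial whose exponent drifts linearly to $-\infty$, and since $\partial_{x_\ell}(h_I g_J) = \pm h_I g_J$ its partial derivatives decay at the same exponential rate.

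A short Jacobi-identity argument, using that $H_i$ and $H_j$ are constants of motion of the $h$-flow (\cite{chang-szmigielski-m1CHlong}), shows that $\{H_i, H_j\}$ is itself a constant of motion, hence constant along every global trajectory. Expanding by bilinearity,
\[
\{H_i, H_j\} = \{H_i^{\infty}, H_j^{\infty}\} + \{R_i, H_j^{\infty}\} + \{H_i^{\infty}, R_j\} + \{R_i, R_j\},
\]
the first term vanishes identically on $M$, and each summand in the remaining three terms contains a factor $\partial_{x_\ell} R_k$ multiplied by a factor that stays bounded on the trajectory; the whole expression therefore tends to $0$ as $t \to +\infty$. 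Since $\{H_i, H_j\}$ is a constant along the trajectory, it must in fact be identically zero there. As such global trajectories fill a non-empty open subset of $M$ and $\{H_i, H_j\}$ is real-analytic on the connected manifold $M$, the identity extends to all of $M$.

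The main obstacle will be the second step: precisely identifying the dominant ``pair monomials'' in $H_j$ and controlling the exponential decay of the residual together with its $x$-derivatives on every global trajectory. Once this asymptotic picture is secured, the proof collapses to a Leibniz computation built on top of \autoref{lem:lem1}.
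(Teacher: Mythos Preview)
Your proposal is correct and follows essentially the same route as the paper: use Moser's trick that $\{H_i,H_j\}$ is conserved, evaluate it along a global trajectory in the asymptotic regime where each $H_j$ reduces to the elementary symmetric polynomial in the pair quantities $Y_p=m_{2p-1}m_{2p}e^{x_{2p-1}-x_{2p}}$, and invoke \autoref{lem:lem1}. The only differences are cosmetic (you take $t\to+\infty$ rather than $t\to-\infty$) or expository: you spell out the decomposition $H_j=H_j^{\infty}+R_j$, the decay of $R_j$ and its partials, and the real-analytic continuation to all of $M$, whereas the paper leaves these steps implicit.
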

\begin{proof}
The idea of the proof goes back at least to the work of J. Moser on 
the finite Toda lattice \cite{moser-toda}.  In the nutshell it amounts to the 
following observation: the Poisson bracket $\{H_i, H_j\}(\textbf{x}^0)$ of two 
conserved quantities $H_i, H_j$ is a also conserved and thus 
instead of the fixed point $\textbf{x}^0$ it can be evaluated on the orbit $\textbf{x}(t)$ going through 
$\textbf{x}^0$, in particular in the asymptotic region $t\rightarrow \pm \infty$, if such is accessible.  In our case both asymptotic regions can be used 
but for the sake of argument we will focus on $t\rightarrow -\infty$.  

Note that, in view of \eqref{eq:Hjs-even}, and for large negative times, the only contributing terms  from individual $H_i$s are 
\begin{equation} \label{eq:Hi-asympt} 
\sum_{I \in \binom{[K]}{i}}\prod_{p\in I}m_{2p-1}m_{2p}e^{x_{2p-1}-x_{2p}}, 
\end{equation}  
hence 
\begin{align*}
&\{H_i,H_j\}(\textbf{x}^0)=\{H_i,H_j\}(\textbf{x}(t))=\lim_{t\rightarrow -\infty}\{H_i,H_j\}(\textbf{x}(t))\\
=&\lim_{t\rightarrow -\infty}\left\{\sum_{I \in \binom{[K]}{i}}\prod_{p\in I}m_{2p-1}m_{2p}e^{x_{2p-1}-x_{2p}},\sum_{J \in \binom{[K]}{j}}\prod_{q\in J}m_{2q-1}m_{2q}e^{x_{2q-1}-x_{2q}}\right\}(\textbf{x}(t))\\
=&\lim_{t\rightarrow -\infty}\sum_{I \in \binom{[K]}{i}}\sum_{J \in \binom{[K]}{j}}\left\{\prod_{p\in I}m_{2p-1}m_{2p}e^{x_{2p-1}-x_{2p}},\prod_{q\in J}m_{2q-1}m_{2q}e^{x_{2q-1}-x_{2q}}\right\}(\textbf{x}(t)).
\end{align*}

By using \autoref{lem:lem1}  the final conclusion follows.
\end{proof}

\item
\textbf{Case $n=2K+1$.}

Again, following \cite{chang-szmigielski-m1CHlong}, 
\begin{equation} \label{eq:Hjs-odd} 
H_j=\sum_{\substack{I,J \in \binom{[2K+1]}{j}\\ I<J}} h_I g_J, \qquad 1\leq j\leq K,
\end{equation} 
with $h_i=m_ie^{x_i},\ \ \ g_i=m_ie^{-x_i}$, 
are constants  of motion for the system \eqref{eq:peakonODEs} in the odd case. 

In the odd case, there is an extra constant of motion, 
which can be computed from the value of the Weyl function at $\infty$ in the spectral variable.  The computation is routine and produces (see 
Section III of \cite{chang-szmigielski-m1CHlong} for details regarding the 
Weyl function)
$$c=\frac{\sum\limits_{\substack{I\in \binom{[2K+1]}{K+1},J \in \binom{[2K+1]}{K}\\ I<J}} h_I g_J}{\sum\limits_{\substack{I,J \in \binom{[2K+1]}{K}\\ I<J}} h_I g_J}=\frac{\sum\limits_{\substack{I\in \binom{[2K+1]}{K+1},J \in \binom{[2K+1]}{K}\\ I<J}} h_I g_J}{H_K},$$
which, in turn, gives an extra constant of motion
$$H_c=\sum\limits_{\substack{I\in \binom{[2K+1]}{K+1},J \in \binom{[2K+1]}{K}\\ I<J}} h_I g_J=\prod_{j=1}^{2K+1}m_je^{(-1)^{j+1}x_j},$$
so that $\{H_1,H_2,\cdots, H_k,H_c\}$ form a set of $K+1$  constants  of motion for the system \eqref{eq:peakonODEs} in this case.

\begin{theorem}\label{thm:odd commute}
The Hamiltonians $H_1,\cdots,H_K,H_c$ Poisson commute .
\end{theorem}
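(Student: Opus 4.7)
The plan is to mirror the proof of \autoref{thm:even commute}, with adjustments to accommodate the isolated ``stopped'' particle and the new invariant $H_c$. According to \autoref{rem:pairs}, as $t\to-\infty$ the rightmost peakon $x_{2K+1}$ remains bounded while the other $2K$ peakons form asymptotic pairs $\{x_{2p-1},x_{2p}\}$, $p=1,\dots,K$, with the gap $x_{2p-1}-x_{2p}$ bounded and every inter-pair separation diverging to $-\infty$. Since $\{H_i,H_j\}$ and $\{H_i,H_c\}$ are themselves conserved, I am free to evaluate them in this asymptotic regime and then invoke Moser's observation to transport the answer back to an arbitrary initial point.

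The first step is to identify the surviving terms of $H_j$ from \eqref{eq:Hjs-odd}. Writing $x_i\sim v_i t+O(1)$, paired particles share a common velocity $v_{2p-1}=v_{2p}$, these common velocities are distinct across different pairs $p$, and the stopped particle has $v_{2K+1}=0$. Each term $h_I g_J$ has leading exponent $\sum_{i\in I}v_i-\sum_{j\in J}v_j$, so survival forces $n_I(p)=n_J(p)$ for every pair $p$, where $n_I(p)$ counts how many members of the pair $\{2p-1,2p\}$ lie in $I$. The interlacing constraint $i_1<j_1<\cdots<i_j<j_j$ then rules out the ``crossed'' pattern ($2p\in I$ and $2p-1\in J$), and rules out any term containing the index $2K+1$: such a term must have $j_j=2K+1$, which places $j$ pair-range indices in $I$ against only $j-1$ in $J$, precluding balance. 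Consequently
\[
H_j\ \longrightarrow\ \sum_{I\in\binom{[K]}{j}}\prod_{p\in I}m_{2p-1}m_{2p}e^{x_{2p-1}-x_{2p}},
\]
the same form as in the even case, so \autoref{lem:lem1} gives $\{H_i,H_j\}=0$ immediately.

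For $\{H_i,H_c\}$, the key observation is that $H_c$ factorises exactly as
\[
H_c=\Bigl(\prod_{p=1}^{K}m_{2p-1}m_{2p}e^{x_{2p-1}-x_{2p}}\Bigr)\cdot m_{2K+1}e^{x_{2K+1}},
\]
an expression that is already bounded and needs no asymptotic approximation. Beyond \autoref{lem:lem1}, the only new ingredient is that $m_{2K+1}e^{x_{2K+1}}$ Poisson-commutes with each pair factor: since $x_{2p-1},x_{2p}<x_{2K+1}$ on $M$,
\[
\{x_{2p-1}-x_{2p},\,x_{2K+1}\}=\sgn(x_{2p-1}-x_{2K+1})-\sgn(x_{2p}-x_{2K+1})=(-1)-(-1)=0.
\]
Combining this with \autoref{lem:lem1} shows that every summand of the asymptotic form of $H_i$ Poisson-commutes with $H_c$, and Moser's trick transports the vanishing to the original point. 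The delicate step I expect is the combinatorial bookkeeping in the second paragraph --- the systematic exclusion of all crossed-pair and $2K+1$-containing interlacings --- after which the proof reduces to a direct appeal to \autoref{lem:lem1} and the one-line sign computation above.
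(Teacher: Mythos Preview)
Your proposal is correct and follows essentially the same approach as the paper: Moser's trick to evaluate the conserved brackets at $t\to-\infty$, the asymptotic reduction of each $H_j$ to the pair-product form, \autoref{lem:lem1} for the pair--pair brackets, and the one-line sign computation $\{x_{2p-1}-x_{2p},x_{2K+1}\}=0$ for the lone particle. Your combinatorial justification of the asymptotic form is more explicit than the paper's (which simply asserts it), though the cleanest route to ``survival forces $n_I(p)=n_J(p)$'' is to note that interlacing and the velocity ordering $v_1=v_2>\cdots>v_{2K-1}=v_{2K}>v_{2K+1}=0$ make each summand $v_{i_k}-v_{j_k}\ge 0$, so the total vanishes only when every $(i_k,j_k)$ is a bound pair.
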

\begin{proof}
Again, asymptotically for large negative times, the $H_j$s simplify to 
$$\sum_{I \in \binom{[K]}{j}}\prod_{j\in I}m_{2j-1}m_{2j}e^{x_{2j-1}-x_{2j}}.$$
Then the argument from the previous theorem applies \textit{verbatim},  proving that 
$$\{H_i,H_j\}=0$$
holds.

We turn now to proving that the $H_j$s and the $H_c$ are in involution. 
A direct computation gives: 

\begin{align*}
&\{H_j,H_c\}(\textbf{x}^0)=\{H_j,H_c\}(\textbf{x}(t))=\lim_{t\rightarrow -\infty}\{H_j,H_c\}(\textbf{x}(t))\\
=&\lim_{t\rightarrow -\infty}\left\{\sum_{J \in \binom{[K]}{j}}\prod_{p\in J}m_{2p-1}m_{2p}e^{x_{2p-1}-x_{2p}},\prod_{q=1}^{2K+1}m_qe^{(-1)^{q+1}x_q}\right\}(\textbf{x}(t))\\
=&\lim_{t\rightarrow -\infty}\sum_{J \in \binom{[K]}{j}}\left\{\prod_{p\in J}m_{2p-1}m_{2p}e^{x_{2p-1}-x_{2p}},\prod_{q=1}^{2K+1}m_qe^{(-1)^{q+1}x_q}\right\}(\textbf{x}(t))\\
=&\lim_{t\rightarrow -\infty}\sum_{J \in \binom{[K]}{j}}\prod_{p\in J}m_{2p-1}m_{2p}e^{x_{2p-1}-x_{2p}}\prod_{q=1}^{2K+1}m_qe^{(-1)^{q+1}x_q}\\
&\qquad\qquad\qquad\cdot\sum_{p\in J,}\left(\sum_{q=1}^K\left\{x_{2p-1}-x_{2p},x_{2q-1}-x_{2q}\right\}+\left\{x_{2p-1}-x_{2p},x_{2K+1}\right\}\right)(\textbf{x}(t))
\end{align*}
We have shown in the course of proving \autoref{lem:lem1} that
$$\left\{x_{2p-1}-x_{2p},x_{2q-1}-x_{2q}\right\}=0.$$
Furthermore, 
$$\left\{x_{2p-1}-x_{2p},x_{2K+1}\right\}=\sgn(x_{2p-1}-x_{2K+1})-\sgn(x_{2p}-x_{2K+1})=0$$
holds.  
Thus, indeed
$$\{H_j,H_c\}(\textbf{x}^0)=0,$$
and the proof is completed.
\end{proof}
\end{enumerate}

\begin{theorem} \label{thm:Liouville integrability} 
The conservative peakon system given by \autoref{eq:peakonODEs} 
is Liouville integrable.  
\end{theorem}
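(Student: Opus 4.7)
The strategy is to combine the pieces already assembled. By \autoref{lem:rank pi} the Poisson tensor $\pi$ has rank $2K$, so Liouville integrability on $(M,\pi)$ requires $K$ functionally independent commuting integrals in the even case $n=2K$ and $K+1$ in the odd case $n=2K+1$ (the extra one being a Casimir that spans the kernel of $\pi$). The Poisson commutation of $H_1,\dots,H_K$, and of $H_c$ with the $H_j$ in the odd case, is already in hand via \autoref{thm:even commute} and \autoref{thm:odd commute}. The one remaining task is to verify functional independence of this family.

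My plan is to examine the top form $\omega=dH_1\wedge\cdots\wedge dH_K$, wedged additionally with $dH_c$ in the odd case. Since all the $H_j$ and $H_c$ Poisson commute, each of the corresponding Hamiltonian vector fields $X_{H_j}$ satisfies $L_{X_{H_j}}dH_i=d\{H_j,H_i\}=0$, so $\omega$ is invariant under every one of those flows. It is therefore enough to exhibit, along each trajectory in the open set of initial data provided by \autoref{thm:global}, a single point at which $\omega\neq 0$. Since those trajectories are defined for every $t\in\R$, I am free to evaluate $\omega$ at a point far along the trajectory in the asymptotic regime $t\to-\infty$, where \autoref{thm:evenass} and \autoref{rem:pairs} make the computation transparent.

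In that regime each $H_j$ reduces, up to exponentially small corrections, to the $j$-th elementary symmetric polynomial $e_j(\xi_1,\dots,\xi_K)$ in the pair variables $\xi_p=m_{2p-1}m_{2p}e^{x_{2p-1}-x_{2p}}$, whose asymptotic values $\xi_p\to 1/\zeta_{p^*}$ are pairwise distinct. Restricting the Jacobian of $(H_1,\dots,H_K)$ to the odd-indexed position columns $x_1,x_3,\dots,x_{2K-1}$ produces a $K\times K$ matrix whose determinant is, to leading order, $\prod_p \xi_p$ times the Vandermonde-type determinant $\det\bigl(e_{j-1}(\hat\xi_p)\bigr)$, which does not vanish as long as the $\xi_p$ are pairwise distinct. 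In the odd case $H_j$ is asymptotically independent of $x_{2K+1}$ while $\partial H_c/\partial x_{2K+1}=H_c\neq 0$, so extending to the column $x_{2K+1}$ yields a block-triangular Jacobian whose determinant is the previous one multiplied by $H_c$, again nonzero; this is consistent with (and in fact a direct check confirms) that $H_c$ is the Casimir accounting for the rank deficit.

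The main obstacle I anticipate is the asymptotic step itself: one must verify that the exponentially small corrections in \autoref{thm:evenass} do not overwhelm the leading Vandermonde determinant and that the admissible spectral data sweep out an open dense subset of $M$, so that functional independence on the trajectories propagates to an open dense subset. Once this is secured, the functional independence combines with the commutation of the integrals and with \autoref{lem:rank pi} to yield Liouville integrability on an open dense subset of the conservative peakon manifold.
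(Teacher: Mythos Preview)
Your proposal is correct and shares the paper's overall architecture: the rank count via \autoref{lem:rank pi}, commutation via \autoref{thm:even commute} and \autoref{thm:odd commute}, and the decisive idea of reducing the $H_j$ in the well-separated-pairs regime to the elementary symmetric polynomials in $\xi_p=m_{2p-1}m_{2p}e^{x_{2p-1}-x_{2p}}$, whose differentials are independent by a Vandermonde-type computation. The one genuine difference is how you promote nonvanishing of $\Omega=dH_1\wedge\cdots$ at an asymptotic point to functional independence on a dense set. You use flow-invariance of $\Omega$ (via $L_{X_{H_j}}\Omega=0$) to spread nonvanishing along each global trajectory, and then must argue that those trajectories sweep out a dense subset of $M$; this is exactly the obstacle you flag, and it is not addressed by \autoref{thm:global} alone. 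The paper sidesteps this entirely with an algebraic observation: since each $H_j$ is a polynomial in the variables $e^{\pm x_l}$, the coefficients of $\Omega$ are rational in those variables, so it suffices to exhibit a single region of $M$ where $\Omega\neq 0$; the well-separated-pairs region (viewed statically as a subset of $M$, not as a limit along the flow) does the job. This also makes your concern about exponentially small corrections moot, since no limit is taken. Your remark that $H_c$ is in fact a Casimir of $\pi$ is correct and is a mild sharpening of what the paper states.
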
 
\begin{proof} 
To prove Liouville integrability of a Hamiltonian system 
defined on an $n$-dimensional  Poisson manifold $(M, \pi)$ we need to show that 
the integrals of motion are functionally independent, they commute and 
the number of them, say $s$,  satisfies $\frac12 rank(\pi)+s=n$  \cite{poisson}.  In the case at hand, by \autoref{lem:rank pi}, $rank(\pi)=K$ and since we have $s=K$ commuting Hamiltonians for $n=2K$ and 
$s=K+1$ commuting  Hamiltonians for $n=2K+1$,  the relation 
$\frac12 rank(\pi)+s=n$ holds in both cases.  
To prove the functional independence we proceed as follows.  
We need to prove that $\Omega=dH_1\wedge dH_2\wedge\dots \wedge dH_s$ is non-zero on a dense subset of $M$.  
Since $H_i$s are rational 
functions of $\xi_l=e^{x_l}, l=1,\dots, n$, so is $\Omega$.  It 
is thus sufficient to show that $\Omega$ is not identically zero.  
In the asymptotic regions used in the proofs of \autoref{thm:even commute}
and \autoref{thm:odd commute} $H_j$s are asymptotically polynomials in the variables 
$e^{x_{2p-1}-x_{2p}}$ of degree $j$ (see e.g. \eqref{eq:Hi-asympt}) 
and, conseqently, $\Omega$ is not identically zero in that region, which 
concludes the proof.  
\end{proof} 

We finish this section by observing that it is elementary to 
identify the Darboux coordinates in the case $n=2K$.  Indeed, 
suppose we introduce coordinates $$\{p_k,\ q_k,\ \ k=1,2,...,K\},$$
  by defining 
  $$p_k=x_{2k}-x_{2k-1},\ \ \ q_k=\sum_{j=1}^k(x_{2j-1}-x_{2j-2})$$
   with the convention that $x_0=0$. Then by a direct computation we obtain
    $$\{p_k,q_m\}=\delta_{km},\ \ \  \{p_k,p_m\}=0,\ \ \  \{q_k,q_m\}=0.$$
Thus $\{p_k,\ q_k, \ k=1,2,...,K\}$ can be regarded as Darboux coordinates.
\section{Concluding Remarks} 
The peakon sector of the modified Camassa-Holm equation 
$m_t+((u_x^2-u^2)m)_x=0$ is not uniquely defined.  
One way of defining it proposed in \cite{gui2013wave, qiao-xia}
has a feature that even though the Sobolev $H^1$ norm 
$||u||_{H^1} ^2$ is one of the Hamiltonians of the bi-Hamiltonian 
formulation of this equation in the smooth sector its peakon counterpart 
does not have this property.  In \cite{chang-szmigielski-short1mCH} we 
proposed a different definition of mCH peakons, based on a different 
regularization of the ill-defined term $u_x^2 m$.  For these (conservative) 
peakons the $H^1$ norm is preserved .  In this paper we 
amplify this statement in the following way: 
\begin{enumerate} 
\item we introduce a Poisson structure relative to which 
the conservative peakon equations are Hamiltonian with the 
Hamiltonian being the very norm $||u||^2_{H^1}$; 
\item using the inverse spectral solutions to conservative 
peakons put forward in 
\cite{chang-szmigielski-m1CHlong} we show the Liouville integrability of 
the conservative peakon system. 
\end{enumerate} 
We conclude this Letter by emphasizing that \autoref{eq:m1CH} is the 
first equation known to us which has two, natural, peakon sectors.  
It remains an open question if there are other peakon equations 
exhibiting this property and, ultimately, what purpose, mathematical or 
physical, the very existence of these two types of singular solitons imparts 
to the subject. 
\section{Acknowledgements} 
The authors thank Stephen Anco for his interest in this work.  
The Poisson structure used in this Letter has been independently 
derived by him and announced in his talk during the conference `` 
Nonlinear Evolution Equations and Wave Phenomena'', Athens, GA, March 2017.  The authors thank Yue Liu for an interesting discussion 
regarding the dissipative peakons.  The authors also thank the 
referee for critical remarks and for bringing references 
\cite{estevez, estevez-sardon} to their attention. Our comments in the 
introduction on the role 
of the reciprocal transformation are partially in response to his/her 
suggestions.

Xiangke Chang was supported in part by the Natural  Sciences and Engineering Research Council of Canada (NSERC), the Department of Mathematics and Statistics of the University of Saskatchewan, PIMS postdoctoral fellowship and 
the Institute of Computational Mathematics, AMSS, CAS.  Jacek Szmigielski was supported in part by NSERC \#163953.

\bibliographystyle{abbrv}
\def\cydot{\leavevmode\raise.4ex\hbox{.}}
  \def\cydot{\leavevmode\raise.4ex\hbox{.}}

\end{document}